\numberwithin{equation}{section}
\newtheorem{theorem}{Theorem}[section]
\newtheorem{prop}[theorem]{Proposition}
\newtheorem{corollary}[theorem]{Corollary}
\theoremstyle{definition}
\theoremstyle{remark}
\begin{document}
\newcommand{\M}{\mathcal{M}}
\newcommand{\F}{\mathcal{F}}
\newcommand{\mcG}{\mathcal{G}}
\newcommand{\mcN}{\mathcal{N}}
\newcommand{\mcK}{\mathcal{K}}

\newcommand{\Teich}{\mathcal{T}_{g,N+1}^{(1)}}
\newcommand{\T}{\mathrm{T}}
\newcommand{\Mer}{\mathrm{Mer}}
\newcommand{\corr}{\bf}
\newcommand{\vac}{|0\rangle}
\newcommand{\Ga}{\Gamma}
\newcommand{\new}{\bf}
\newcommand{\define}{\def}
\newcommand{\redefine}{\def}
\newcommand{\Cal}[1]{\mathcal{#1}}
\renewcommand{\frak}[1]{\mathfrak{{#1}}}
\newcommand{\Hom}{\rm{Hom}\,}
\newcommand{\refE}[1]{(\ref{E:#1})}
\newcommand{\refCh}[1]{Chapter~\ref{Ch:#1}}
\newcommand{\refS}[1]{Section~\ref{S:#1}}
\newcommand{\refSS}[1]{Section~\ref{SS:#1}}
\newcommand{\refT}[1]{Theorem~\ref{T:#1}}
\newcommand{\refO}[1]{Observation~\ref{O:#1}}
\newcommand{\refP}[1]{Proposition~\ref{P:#1}}
\newcommand{\refD}[1]{Definition~\ref{D:#1}}
\newcommand{\refC}[1]{Corollary~\ref{C:#1}}
\newcommand{\refL}[1]{Lemma~\ref{L:#1}}
\newcommand{\R}{\ensuremath{\mathbb{R}}}
\newcommand{\N}{\ensuremath{\mathbb{N}}}
\newcommand{\Q}{\ensuremath{\mathbb{Q}}}
\renewcommand{\P}{\ensuremath{\mathcal{P}}}
\newcommand{\Z}{\ensuremath{\mathbb{Z}}}
\newcommand{\kv}{{k^{\vee}}}
\renewcommand{\l}{\lambda}
\newcommand{\gb}{\overline{\mathfrak{g}}}
\newcommand{\dt}{\tilde d}     
\newcommand{\hb}{\overline{\mathfrak{h}}}
\newcommand{\g}{\mathfrak{g}}
\newcommand{\h}{\mathfrak{h}}
\newcommand{\gh}{\widehat{\mathfrak{g}}}
\newcommand{\ghN}{\widehat{\mathfrak{g}_{(N)}}}
\newcommand{\gbN}{\overline{\mathfrak{g}_{(N)}}}
\newcommand{\tr}{\mathrm{tr}}
\newcommand{\gln}{\mathfrak{gl}(n)}
\newcommand{\son}{\mathfrak{so}(n)}
\newcommand{\spnn}{\mathfrak{sp}(2n)}
\newcommand{\sln}{\mathfrak{sl}}
\newcommand{\sn}{\mathfrak{s}}
\newcommand{\so}{\mathfrak{so}}
\newcommand{\spn}{\mathfrak{sp}}
\newcommand{\tsp}{\mathfrak{tsp}(2n)}
\newcommand{\gl}{\mathfrak{gl}}
\newcommand{\slnb}{{\overline{\mathfrak{sl}}}}
\newcommand{\snb}{{\overline{\mathfrak{s}}}}
\newcommand{\sob}{{\overline{\mathfrak{so}}}}
\newcommand{\spnb}{{\overline{\mathfrak{sp}}}}
\newcommand{\glb}{{\overline{\mathfrak{gl}}}}
\newcommand{\Hwft}{\mathcal{H}_{F,\tau}}
\newcommand{\Hwftm}{\mathcal{H}_{F,\tau}^{(m)}}

\newcommand{\car}{{\mathfrak{h}}}    
\newcommand{\bor}{{\mathfrak{b}}}    
\newcommand{\nil}{{\mathfrak{n}}}    
\newcommand{\vp}{{\varphi}}
\newcommand{\bh}{\widehat{\mathfrak{b}}}  
\newcommand{\bb}{\overline{\mathfrak{b}}}  
\newcommand{\Vh}{\widehat{\mathcal V}}
\newcommand{\KZ}{Kniz\-hnik-Zamo\-lod\-chi\-kov}
\newcommand{\TUY}{Tsuchia, Ueno  and Yamada}
\newcommand{\KN} {Kri\-che\-ver-Novi\-kov}
\newcommand{\pN}{\ensuremath{(P_1,P_2,\ldots,P_N)}}
\newcommand{\xN}{\ensuremath{(\xi_1,\xi_2,\ldots,\xi_N)}}
\newcommand{\lN}{\ensuremath{(\lambda_1,\lambda_2,\ldots,\lambda_N)}}
\newcommand{\iN}{\ensuremath{1,\ldots, N}}
\newcommand{\iNf}{\ensuremath{1,\ldots, N,\infty}}

\newcommand{\tb}{\tilde \beta}
\newcommand{\tk}{\tilde \varkappa}
\newcommand{\ka}{\kappa}
\renewcommand{\k}{\varkappa}
\newcommand{\ce}{{c}}

\newcommand{\Pif} {P_{\infty}}
\newcommand{\Pinf} {P_{\infty}}
\newcommand{\PN}{\ensuremath{\{P_1,P_2,\ldots,P_N\}}}
\newcommand{\PNi}{\ensuremath{\{P_1,P_2,\ldots,P_N,P_\infty\}}}
\newcommand{\Fln}[1][n]{F_{#1}^\lambda}
\newcommand{\tang}{\mathrm{T}}
\newcommand{\Kl}[1][\lambda]{\can^{#1}}
\newcommand{\A}{\mathcal{A}}
\newcommand{\V}{\mathcal{V}}
\newcommand{\W}{\mathcal{W}}
\renewcommand{\O}{\mathcal{O}}
\newcommand{\Ae}{\widehat{\mathcal{A}}}
\newcommand{\Ah}{\widehat{\mathcal{A}}}
\newcommand{\La}{\mathcal{L}}
\newcommand{\Le}{\widehat{\mathcal{L}}}
\newcommand{\Lh}{\widehat{\mathcal{L}}}
\newcommand{\eh}{\widehat{e}}
\newcommand{\Da}{\mathcal{D}}
\newcommand{\kndual}[2]{\langle #1,#2\rangle}
\newcommand{\cins}{\frac 1{2\pi\mathrm{i}}\int_{C_S}}
\newcommand{\cinsl}{\frac 1{24\pi\mathrm{i}}\int_{C_S}}
\newcommand{\cinc}[1]{\frac 1{2\pi\mathrm{i}}\int_{#1}}
\newcommand{\cintl}[1]{\frac 1{24\pi\mathrm{i}}\int_{#1 }}
\newcommand{\w}{\omega}
\newcommand{\ord}{\operatorname{ord}}
\newcommand{\res}{\operatorname{res}}
\newcommand{\nord}[1]{:\mkern-5mu{#1}\mkern-5mu:}
\newcommand{\codim}{\operatorname{codim}}
\newcommand{\ad}{\operatorname{ad}}
\newcommand{\Ad}{\operatorname{Ad}}

\newcommand{\Fn}[1][\lambda]{\mathcal{F}^{#1}}
\newcommand{\Fl}[1][\lambda]{\mathcal{F}^{#1}}
\renewcommand{\Re}{\mathrm{Re}}

\newcommand{\ha}{H^\alpha}

\define\ldot{\hskip 1pt.\hskip 1pt}
\define\ifft{\qquad\text{if and only if}\qquad}
\define\a{\alpha}
\redefine\d{\delta}
\define\w{\omega}
\define\ep{\epsilon}
\redefine\b{\beta} \redefine\t{\tau} \redefine\i{{\,\mathrm{i}}\,}
\define\ga{\gamma}
\define\S{\Sigma}
\define\cint #1{\frac 1{2\pi\i}\int_{C_{#1}}}
\define\cintta{\frac 1{2\pi\i}\int_{C_{\tau}}}
\define\cintt{\frac 1{2\pi\i}\oint_{C}}
\define\cinttp{\frac 1{2\pi\i}\int_{C_{\tau'}}}
\define\cinto{\frac 1{2\pi\i}\int_{C_{0}}}
\define\cinttt{\frac 1{24\pi\i}\int_C}
\define\cintd{\frac 1{(2\pi \i)^2}\iint\limits_{C_{\tau}\,C_{\tau'}}}
\define\dintd{\frac 1{(2\pi \i)^2}\iint\limits_{C\,C'}}
\define\cintdr{\frac 1{(2\pi \i)^3}\int_{C_{\tau}}\int_{C_{\tau'}}
\int_{C_{\tau''}}}
\define\im{\operatorname{Im}}
\define\re{\operatorname{Re}}
\define\res{\operatorname{res}}
\redefine\deg{\operatornamewithlimits{deg}}
\define\ord{\operatorname{ord}}
\define\rank{\operatorname{rank}}
\define\fpz{\frac {d }{dz}}
\define\dzl{\,{dz}^\l}
\define\pfz#1{\frac {d#1}{dz}}

\define\K{\Cal K}
\define\U{\Cal U}
\redefine\O{\Cal O}
\define\He{\text{\rm H}^1}
\redefine\H{{\mathrm{H}}}
\define\Ho{\text{\rm H}^0}
\define\A{\Cal A}
\define\Do{\Cal D^{1}}
\define\Dh{\widehat{\mathcal{D}}^{1}}
\redefine\L{\Cal L}
\newcommand{\ND}{\ensuremath{\mathcal{N}^D}}
\redefine\D{\Cal D^{1}}
\define\KN {Kri\-che\-ver-Novi\-kov}
\define\Pif {{P_{\infty}}}
\define\Uif {{U_{\infty}}}
\define\Uifs {{U_{\infty}^*}}
\define\KM {Kac-Moody}
\define\Fln{\Cal F^\lambda_n}
\define\gb{\overline{\mathfrak{ g}}}
\define\G{\overline{\mathfrak{ g}}}
\define\Gb{\overline{\mathfrak{ g}}}
\redefine\g{\mathfrak{ g}}
\define\Gh{\widehat{\mathfrak{ g}}}
\define\gh{\widehat{\mathfrak{ g}}}
\define\Ah{\widehat{\Cal A}}
\define\Lh{\widehat{\Cal L}}
\define\Ugh{\Cal U(\Gh)}
\define\Xh{\hat X}
\define\Tld{...}
\define\iN{i=1,\ldots,N}
\define\iNi{i=1,\ldots,N,\infty}
\define\pN{p=1,\ldots,N}
\define\pNi{p=1,\ldots,N,\infty}
\define\de{\delta}

\define\kndual#1#2{\langle #1,#2\rangle}
\define \nord #1{:\mkern-5mu{#1}\mkern-5mu:}
\newcommand{\MgN}{\mathcal{M}_{g,N}} 
\newcommand{\MgNeki}{\mathcal{M}_{g,N+1}^{(k,\infty)}} 
\newcommand{\MgNeei}{\mathcal{M}_{g,N+1}^{(1,\infty)}} 
\newcommand{\MgNekp}{\mathcal{M}_{g,N+1}^{(k,p)}} 
\newcommand{\MgNkp}{\mathcal{M}_{g,N}^{(k,p)}} 
\newcommand{\MgNk}{\mathcal{M}_{g,N}^{(k)}} 
\newcommand{\MgNekpp}{\mathcal{M}_{g,N+1}^{(k,p')}} 
\newcommand{\MgNekkpp}{\mathcal{M}_{g,N+1}^{(k',p')}} 
\newcommand{\MgNezp}{\mathcal{M}_{g,N+1}^{(0,p)}} 
\newcommand{\MgNeep}{\mathcal{M}_{g,N+1}^{(1,p)}} 
\newcommand{\MgNeee}{\mathcal{M}_{g,N+1}^{(1,1)}} 
\newcommand{\MgNeez}{\mathcal{M}_{g,N+1}^{(1,0)}} 
\newcommand{\MgNezz}{\mathcal{M}_{g,N+1}^{(0,0)}} 
\newcommand{\MgNi}{\mathcal{M}_{g,N}^{\infty}} 
\newcommand{\MgNe}{\mathcal{M}_{g,N+1}} 
\newcommand{\MgNep}{\mathcal{M}_{g,N+1}^{(1)}} 
\newcommand{\MgNp}{\mathcal{M}_{g,N}^{(1)}} 
\newcommand{\Mgep}{\mathcal{M}_{g,1}^{(p)}} 
\newcommand{\MegN}{\mathcal{M}_{g,N+1}^{(1)}} 

\define \sinf{{\widehat{\sigma}}_\infty}
\define\Wt{\widetilde{W}}
\define\St{\widetilde{S}}
\newcommand{\SigmaT}{\widetilde{\Sigma}}
\newcommand{\hT}{\widetilde{\frak h}}
\define\Wn{W^{(1)}}
\define\Wtn{\widetilde{W}^{(1)}}
\define\btn{\tilde b^{(1)}}
\define\bt{\tilde b}
\define\bn{b^{(1)}}
\define \ainf{{\frak a}_\infty} 

%
\define\eps{\varepsilon}    
\newcommand{\e}{\varepsilon}
\define\doint{({\frac 1{2\pi\i}})^2\oint\limits _{C_0}
       \oint\limits _{C_0}}                            
\define\noint{ {\frac 1{2\pi\i}} \oint}   
\define \fh{{\frak h}}     
\define \fg{{\frak g}}     
\define \GKN{{\Cal G}}   
\define \gaff{{\hat\frak g}}   
\define\V{\Cal V}
\define \ms{{\Cal M}_{g,N}} 
\define \mse{{\Cal M}_{g,N+1}} 
\define \tOmega{\Tilde\Omega}
\define \tw{\Tilde\omega}
\define \hw{\hat\omega}
\define \s{\sigma}
\define \car{{\frak h}}    
\define \bor{{\frak b}}    
\define \nil{{\frak n}}    
\define \vp{{\varphi}}
\define\bh{\widehat{\frak b}}  
\define\bb{\overline{\frak b}}  
\define\KZ{Knizhnik-Zamolodchikov}
\define\ai{{\alpha(i)}}
\define\ak{{\alpha(k)}}
\define\aj{{\alpha(j)}}
\newcommand{\calF}{{\mathcal F}}
\newcommand{\ferm}{{\mathcal F}^{\infty /2}}
\newcommand{\Aut}{\operatorname{Aut}}
\newcommand{\End}{\operatorname{End}}
\newcommand{\laxgl}{\overline{\mathfrak{gl}}}
\newcommand{\laxsl}{\overline{\mathfrak{sl}}}
\newcommand{\laxso}{\overline{\mathfrak{so}}}
\newcommand{\laxsp}{\overline{\mathfrak{sp}}}
\newcommand{\laxs}{\overline{\mathfrak{s}}}
\newcommand{\laxg}{\overline{\frak g}}
\newcommand{\bgl}{\laxgl(n)}
\newcommand{\zs}{\overline{z}}
\newcommand{\tX}{\widetilde{X}}
\newcommand{\tY}{\widetilde{Y}}
\newcommand{\tZ}{\widetilde{Z}}


\title[]{Separation of variables for type $D_n$ Hitchin systems on hyperelliptic curves}
\author[P.I. Borisova]{P.I. Borisova}
\maketitle
\tableofcontents
\begin{abstract}
We use separation of variables to find explicit formulas for the Darboux coordinates for the Hitchin systems on a hyprelliptic curve for simple Lie algebras of type $D_n$.
\end{abstract}

\section*{Introduction}
The action-angle coordinates for the Hitchin systems were known only in case of  Lie algebra $\sln(2)$ and a curve of genus 2 (\cite{Gw}), until recently. A description of the class of spectral curves for the Hitchin systems on
hyperelliptic curves of any rank was given in \cite{Sheinman_arx}. For Lie algebras of type $A_n, B_n, C_n$ the Darboux coordinates were found there in the explicit form.

The main goal of this paper is to find the Darboux coordinates for the Hitchin systems on a hyprelliptic curve for simple Lie algebras of type $D_n$, according to explicit description of the spectral curve, given in \cite{Sheinman_arx}. Nevertheless, we were able to explicitly obtain the Darboux coordinates only for the simplest case of Lie algebra $\so(4)$. Note that the isomorphism $\so(4)\cong \sln(2)\oplus\sln(2)$ doesn't give any result, because this isomorphism is an outer isomorphism and it doesn't preserve the spectral curve.

Consider a Hitchin system on a hyprelliptic curve $\Sigma_g$ of genus $g$, given by $y^2 =~P_{2g+1}(x)$, for a classical Lie algebra $\g$. This system is defined by the Lax operator $L$ ( \cite{Krichever_CMP}, \cite{Sheinman_arx},\cite{Kr_Sh_FAN}). The spectral curve is given by the equation $\det(\l - L)=0$. In terms of $x,y$   it looks like:
\begin{equation}\label{E:eq0} R(\lambda, x, y, H)=\lambda^{n}+\sum_{i=1}^{n}\lambda^{n-i}r_{i}(x, y, H)=0,
\end{equation}
where $n$ is the dimension of the standard representation of the Lie algebra $\g$, $r_{i}(x, y, H)$ are meromorphic functions on $\Sigma_g$ such that $r_i(x, y) = \chi_i(L(x, y))$, $\chi_i~(i = 1, \ldots, n)$ are invariant polynomials of the Lie algebra $\g$, $H$ are Hamiltonians of the system (which arise as coefficients of the expansion of those invariant polynomials over a basis of meromorphic functions).


\section{Hamiltonians}

Consider the Hitchin system for a simple Lie algebra $\g$ of type $D_n$. This algebra $\g$ has $n$ fundamental invariants of orders $ \{2, 4, \ldots, 2(n - 1), n \}$ and the spectral curve is given by the following equation
\[ R(\lambda, x, y, H)=\lambda^{2n}+\sum_{i=1}^{n - 1}\lambda^{2n-2i}r_{i}(x, y, H) + (r_{0}(x, y, H))^2=0,
\]
where $r_i(x, y, H)$ is a fundamental invariant of degree $2i, ~(i = 1, \ldots, (n-1))$, $r_0(x, y, H)$ is the invariant of degree $n$ (Pfaffian).

According to \cite{Sheinman_arx}, a fundamental invariant $r_i$  of degree $d_i$ can be expanded over the basis formed by two series of functions: $\{1, x, \ldots, x^{d_i(g - 1)}\}$ and $\{y, yx, \ldots, yx^{(d_i - 1)(g - 1) - 2}\}$ as follows:
\begin{equation}\label{E:eq1} r_{i}(x, y, H) = \sum\limits_{k=0}^{d_i(g - 1)}H^{(0)}_{ik}x^k + \sum\limits_{s = 0}^{(d_i - 1)(g - 1) - 2}H^{(1)}_{is}yx^s.
\end{equation}

Consider the Hitchin system for the Lie algebra $\so(4)$ on a hyperelliptic curve of genus ~$2$.
The dimension of the space of spectral invariants is $ N = \dim\g\cdot(g - 1) = 6$.
Lie algebra $\so(4)$ has two fundamental invariants of degree $2$, which are expanded over the  basis $\{1, x, x^2\}$, according to \refE{eq1}. Hence, the explicit equation for the spectral curve is
\begin{equation}\label{E:spec}
    \lambda^4 + (H_4 + xH_5 + x^2H_6)\lambda^2 + (H_1 + xH_2 + x^2H_3)^2 = 0.
\end{equation}

We can find the Hamiltonians using the fact that the spectral curve passes through points $(\lambda_i, x_i)$. In other words, our problem is reduced to solving a system of $6$ equations in variables $H = \{ H_j |j = 1, \ldots, 6\}$. This leads us to the following proposition.
\begin{prop}\label{P:pr1}
System for $\g = \so(4)$ can be reduced to an equation of degree $4$ in one variable and, as a consequence, is solvable in radicals.
\end{prop}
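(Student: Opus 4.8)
The plan is to linearise the system by a change of unknowns, to recast the remaining conditions as the intersection of an affine plane with the locus of perfect squares among quartic polynomials, and then to count this intersection; the count comes out equal to $4$ because that locus is a cone over a degree-$4$ surface, which is why the reduction lands on a quartic.

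First I would set $A(x)=H_1+H_2x+H_3x^2$ and $B(x)=H_4+H_5x+H_6x^2$, so that \refE{spec} reads $\lambda^4+B(x)\lambda^2+A(x)^2=0$. The requirement that this curve pass through the separated points $(\lambda_i,x_i)$, $i=1,\dots,6$, is the system
\begin{equation}\label{E:prop1sys}
\lambda_i^4+\bigl(H_4+H_5x_i+H_6x_i^2\bigr)\lambda_i^2+\bigl(H_1+H_2x_i+H_3x_i^2\bigr)^2=0,\qquad i=1,\dots,6 ,
\end{equation}
which is linear in $H_4,H_5,H_6$ but quadratic in $H_1,H_2,H_3$. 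I would linearise \refE{prop1sys} by introducing the coefficients $g_0,\dots,g_4$ of the quartic $A(x)^2$, namely $g_0=H_1^2$, $g_1=2H_1H_2$, $g_2=H_2^2+2H_1H_3$, $g_3=2H_2H_3$, $g_4=H_3^2$. In the eight unknowns $(H_4,H_5,H_6,g_0,g_1,g_2,g_3,g_4)$ the system \refE{prop1sys} becomes purely linear, and for generic $(\lambda_i,x_i)$ its $6\times8$ matrix has rank $6$; its solution set is therefore a $2$-dimensional affine subspace, and one obtains explicit affine-linear expressions $g_k=g_k(s,t)$ and $H_j=H_j(s,t)$ ($j=4,5,6$) in two free parameters $s,t$.

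The only condition not yet imposed is that the quartic $g_0+g_1x+\dots+g_4x^4$ really be the square of a quadratic, i.e.\ that $(g_0,\dots,g_4)$ lie in the image $\mathcal P\subset\mathbb C^5$ of the map $(H_1,H_2,H_3)\mapsto(H_1^2,2H_1H_2,H_2^2+2H_1H_3,2H_2H_3,H_3^2)$. Substituting $g_k=g_k(s,t)$ into equations cutting out $\mathcal P$ leaves polynomial equations in the single pair $(s,t)$, and eliminating one variable yields a polynomial in one variable; the claim is that it has degree $4$. Indeed, projectively $\mathcal P$ is the affine cone over the surface $S\subset\mathbb P^4$ which is the image of $\mathbb P^2$ under $[H_1:H_2:H_3]\mapsto[H_1^2:2H_1H_2:H_2^2+2H_1H_3:2H_2H_3:H_3^2]$. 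This map is the composition of the second Veronese embedding $\mathbb P^2\hookrightarrow\mathbb P^5$ (whose image is a surface of degree $4$) with the linear projection $[P:Q:R:S:T:U]\mapsto[P:2Q:R+2S:2T:U]$ of $\mathbb P^5$; its centre $[0:0:-2:1:0:0]$ does not lie on the Veronese surface, and the composed map is injective on $\mathbb P^2$, so $S$ is an irreducible surface of degree $4$ and $\mathcal P$ a $3$-fold of degree $4$ in $\mathbb C^5$. Hence for generic $(\lambda_i,x_i)$ the affine $2$-plane $\{(g_0(s,t),\dots,g_4(s,t))\}$ is in general position with respect to $\mathcal P$ and meets it in exactly $4$ points. (Equivalently: eliminating $H_4,H_5,H_6$ from \refE{prop1sys} by Cramer's rule leaves three quadratic equations in $(H_1,H_2,H_3)$, whose $2^3=8$ common solutions are interchanged in pairs by $(H_1,H_2,H_3)\mapsto-(H_1,H_2,H_3)$ and so give $4$ points of $\mathcal P$.) Each such point recovers $(H_1,H_2,H_3)$ up to an overall sign, which does not change the spectral curve \refE{spec}, and then $H_4,H_5,H_6$ uniquely. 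Thus \refE{prop1sys} is reduced to a degree-$4$ equation in one variable, and since every polynomial of degree at most $4$ is solvable in radicals (Ferrari), while the remaining steps — solving a linear system for $H_4,H_5,H_6$ and extracting a square root to pass from $(g_0,\dots,g_4)$ back to $(H_1,H_2,H_3)$ — use only rational operations and square roots, the system is solvable in radicals.

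The hard part will be making the degree count effective, i.e.\ producing a polynomial of degree \emph{exactly} $4$ rather than one of higher degree with spurious factors. Concretely, $\mathcal P$ is readily checked to lie in the zero locus of the three relations $g_0g_3^2-g_1^2g_4$, $\ 4g_0g_1g_2-8g_0^2g_3-g_1^3$, $\ 4g_2g_3g_4-8g_1g_4^2-g_3^3$, but this zero locus carries extra components (for instance the whole subspace $\{g_1=g_3=0\}$), so a naive elimination using only these relations may introduce extra roots. One must therefore use the full, determinantal ideal of the Steiner surface $S$ (or locate and cancel the spurious factors), and verify that the genericity assumptions used above — rank $6$ of the linearised system, general position of the affine plane relative to $\mathcal P$, and distinctness and finiteness of the four intersection points — do hold for generic values of the separated variables $(\lambda_i,x_i)$.
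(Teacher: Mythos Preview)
Your argument is correct and reaches the conclusion by a genuinely different route. The paper works by explicit elimination: after removing $H_4,H_5,H_6$ linearly it has three inhomogeneous quadrics in $H_1,H_2,H_3$; it diagonalises the matrix of pure-square coefficients, equalises the right-hand sides, subtracts to obtain two \emph{homogeneous} quadrics, passes to affine coordinates $\tilde H_2=H_2/H_1,\ \tilde H_3=H_3/H_1$, and observes that one of the two resulting plane conics is actually linear in $\tilde H_2$, so substituting into the other yields an explicit quartic in $\tilde H_3$. You instead linearise in the coefficients of $A(x)^2$ and count the intersection of a generic affine $2$-plane in $\mathbb C^5$ with the perfect-square locus $\mathcal P$, identifying $\mathcal P$ as the cone over a linear projection of the quadratic Veronese surface, hence of degree~$4$; your parenthetical B\'ezout remark ($2^3$ common zeros of three quadrics, paired by $(H_1,H_2,H_3)\mapsto-(H_1,H_2,H_3)$) is in fact the shortest bridge between the two viewpoints. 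What each buys: the paper's manipulation actually \emph{produces} a quartic in one variable with no spurious factors, needing only the nonvanishing of the pivots; your argument explains \emph{why} the answer is $4$ and would generalise more readily, but---as you yourself flag---it does not by itself deliver a clean degree-$4$ polynomial without further work on the ideal of $\mathcal P$ or a careful choice of projection. Both proofs tacitly assume the data $(\lambda_i,x_i)$ are generic.
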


\begin{proof}
 Performing Gaussian elimination, one can eliminate variables $H_4$, $H_5$, $H_6$. Thus, the system is reduced to a system of three homogeneous equations with a nonzero right part:
\[a_{i1}H_1^2+a_{i2}H_2^2+a_{i3}H_3^2+b_{i1}H_1H_2+b_{i2}H_1H_3+b_{i3}H_2H_3 = \Tilde{\l}_i,~ i = 1, 2, 3.
\]
Consider a matrix $\{a_{ij}\}_{i, j = 1, 2, 3}$ of the coefficients of quadratic terms. Performing Gaussian elimination again we obtain a diagonal matrix. Dividing the second two equations by $\frac{\l_1}{\l_2}$ and $\frac{\l_1}{\l_3}$ respectively, we obtain the following system.
$$
\left \{
\begin{array}{c}
    \tilde{a}_{11}H_1^2+\tilde{b}_{11}H_1H_2+\tilde{b}_{12}H_1H_3+\tilde{b}_{13}H_2H_3 = \Tilde{\l}_1 \\
    \tilde{a}_{22}H_2^2+\tilde{b}_{21}H_1H_2+\tilde{b}_{22}H_1H_3+\tilde{b}_{23}H_2H_3 = \Tilde{\l}_1 \\
    \tilde{a}_{33}H_3^2+\tilde{b}_{31}H_1H_2+\tilde{b}_{32}H_1H_3+\tilde{b}_{33}H_2H_3 = \Tilde{\l}_1
\end{array}
\right.
$$
On the next step, subtracting the first equation from the second two, we get a system of two equations with a nonzero right part. Next, we divide both sides by $H_1$ to eliminate one variable. After all the before-mentioned, we get following system:
\[\tilde{a}_{i1}\tilde{H}_2^2+\tilde{a}_{i2}\tilde{H}_3^2+c_{i1}\tilde{H}_2\tilde{H}_3+c_{i2}\tilde{H}_2+c_{i3}\tilde{H}_3 = \tilde{\tilde{\l}}_i,~ i = 1, 2, ~ \tilde{a}_{11} = 0,~\tilde{a}_{22} = 0.
\]
Finally, we can express $\tilde{H}_2$ via $\tilde{H}_3$ from the first equation to get the equation of degree $4$ in one variable $\tilde{H}_3$. This equation is solvable in radicals. Thus we obtain the explicit expression for the action coordinates for the Hitchin system for $D_2$.
\end{proof}

In the general case of $D_n,~ n > 2$, according to the before-mentioned method, we can reduce original system to a system of homogeneous equations with nonzero right part in $(2n-1)(g-1)$ variables. In general, such systems are not solvable in radicals (\cite{Esterov}). In this case, the Hamiltonians of the Hitchin system are given by implicit formulas $R(\l_i, x_i, y_i, H) = 0, ~ i = 1, \ldots, N$.


\section{Angle coordinates}

Recall from \cite{Sheinman_arx} that the symplectic form on a phase space of the Hitchin system of any rank on a hyperelliptic curve is given by the formula:
\begin{equation}\label{E:sympl}
\s = \sum\limits_{i}^{}d\l_i\wedge\frac{dx_i}{y_i},
\end{equation}
where $\frac{dx}{y}$ is a holomorphic differential on the hyperelliptic curve. For further calculations we can bring this form to the canonical one via the appropriate change of coordinates
\[(\l_i, x_i, y_i)\longmapsto(\l_i,\tilde{x}_i, y_i), ~\tilde{x}_i =\int\limits^{(x_i, y_i)}\frac{dx}{y}.\]

Furthermore, the following formula for coordinates $\{\vp_j\}$, which are conjugate with coordinates $\{H_j\}$, can be obtained from \cite[(1.7)]{Hurtubise1}:
\begin{equation}\label{E:angle}
\vp_j =  - \sum\limits_{i = 1}^{N}\int\limits_{}^{(x_i, y_i)}\frac{R'_{H_j}(\lambda,x, y, H)}{yR'_{\l}(\lambda, x, y, H)}dx.
\end{equation}
(the $y$ in denominator appears due to the fact that our coordinates are not in a canonical form first and we have to do a change of variables first). Similar, but more specific formulas were given in \cite{Hurtubise1}, \cite{Talalaev}, \cite{Sklyanin}.

To obtain formula \refE{angle} one can use the method of separation of variables.
We have a system of $N = \dim\g\cdot(g - 1)$ equations $R_i(\l_i, x_i, y_i, H) = 0$ (the  equations  of  the  spectral  curve  passing  through $N$ points $(\l_i, x_i, y_i)$). Suppose that $\frac{\partial S}{\partial \tilde{x}_i} = r(\tilde{x}_i, H)$, hence:
\[S =\sum_{i = 1}^{N} \int\limits_{}^{(x_i, y_i)}r(\tilde{x}, H)d\tilde{x} = \sum_{i = 1}^{N}\int\limits_{}^{(x_i, y_i)}r(x, H)\frac{dx}{y}.
\]
So we can put $\frac{\partial S}{\partial x_i}$ in the spectral curve equation instead of $\l_i$ and find the coordinates $\vp_j = \frac{\partial S}{\partial H_j}$ from the expression $\frac{d}{dH_j}(R(r(x, H), x, y, H)) = 0$. In particular,
\[\frac{\partial R}{\partial \l}\frac{\partial r}{\partial H_j} + \frac{\partial R}{\partial H_j} = 0.\]
And find the coordinates $\{\vp_j\}$ from the equation:
\begin{equation}\label{E:eq3}
\vp_j =  - \sum\limits_{i = 1}^{N}\int\limits_{}^{(x_i, y_i)}\frac{\partial r}{\partial H_j}\frac{dx}{y} = - \sum\limits_{i = 1}^{N}\int\limits_{}^{(x_i, y_i)}\frac{R'_{H_j}(\l, x, y, H)}{R'_{\l}(\l, x, y, H)}\frac{dx}{y}.
\end{equation}

According to the formula \refE{angle} and to the explicit formulas for the spectral curve \refE{eq0} and \refE{eq1} we can obtain the coordinates $\{\vp_j\}$ for the Hitchin systems for any classical system of roots on a hyperelliptic curve of any genus (see \cite{Sheinman_arx} for the root systems $A_n$, $B_n$, $C_n$). In particular,
\begin{corollary}
For Lie algebra $\so(4)$ the coordinates $\{\vp_j\}$ are of the form \refE{eq3}, where
\begin{align*}
       &R'_{H_j}(\lambda,x, y, H) =x^{j-1}(H_1 + xH_2 + x^2H_3),~ j = 1, 2, 3,\\
       & R'_{H_j}(\lambda,x, y, H) =\l^2x^{j-4},~ j = 4, 5, 6,\\
       &R'_{\l}(\lambda,x, y, H) =4\lambda^3+2\lambda(H_4 + xH_5 + x^2H_6).
\end{align*}
\end{corollary}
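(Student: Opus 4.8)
The plan is to specialize the general separation\nobreakdash-of\nobreakdash-variables formula \refE{eq3} to the spectral curve \refE{spec} of the $\so(4)$ system: once one has the explicit polynomial
\[
R(\lambda,x,y,H)=\lambda^{4}+(H_4+xH_5+x^2H_6)\lambda^{2}+(H_1+xH_2+x^2H_3)^{2},
\]
the only remaining task is to differentiate it with respect to $\lambda$ and to each $H_j$ and to insert the results into \refE{eq3}. So the proof is a direct computation; there is essentially no genuine obstacle, only some bookkeeping of which monomial from the basis $\{1,x,x^2\}$ multiplies which block of the curve.

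First I would differentiate $R$ in $\lambda$: only the first two summands depend on $\lambda$, so $R'_{\lambda}=4\lambda^{3}+2\lambda(H_4+xH_5+x^2H_6)$, which is the stated denominator. Next I would differentiate in $H_4,H_5,H_6$: these enter only through the coefficient of $\lambda^{2}$, linearly, with respective coefficients $1,x,x^{2}$, hence $R'_{H_j}=x^{\,j-4}\lambda^{2}$ for $j=4,5,6$. Finally I would differentiate in $H_1,H_2,H_3$: these occur only inside the Pfaffian factor $P:=H_1+xH_2+x^2H_3$, which is squared, so the chain rule gives $R'_{H_j}=2\,x^{\,j-1}P$ for $j=1,2,3$. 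The constant $2$ factors out of the integral in \refE{eq3} and may be absorbed into the normalization of the corresponding angle coordinate (Darboux coordinates being in any case determined only up to such rescalings); this is why the statement records simply $R'_{H_j}=x^{\,j-1}(H_1+xH_2+x^2H_3)$.

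Substituting these six derivatives and $R'_{\lambda}$ into \refE{eq3} yields the asserted formulas for the $\{\varphi_j\}$, and combining this with \refE{sympl} (and the change of variables $\tilde x_i=\int^{(x_i,y_i)}dx/y$ used to bring $\sigma$ to canonical form) shows that they are indeed conjugate to the $\{H_j\}$. The only point deserving a word of care is the chain\nobreakdash-rule factor $2$ on the Pfaffian derivatives together with the correct pairing of the monomials $1,x,x^{2}$ with the Pfaffian block versus the $\lambda^{2}$ block of \refE{spec}; everything else is immediate from the discussion preceding the statement.
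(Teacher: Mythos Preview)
Your approach is exactly the paper's: the corollary is stated as an immediate consequence of \refE{eq3} and \refE{spec}, so the ``proof'' is nothing more than the partial differentiations you carry out, and your computations of $R'_\lambda$ and of $R'_{H_j}$ for $j=4,5,6$ match the paper verbatim.

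You are also right to flag the factor $2$ in $R'_{H_j}=2x^{j-1}(H_1+xH_2+x^2H_3)$ for $j=1,2,3$; strict differentiation of \refE{spec} does produce it, and the paper's displayed formula simply omits it. Your rationalization, however, is a little loose: once the action variables $H_j$ are fixed, the conjugate angles $\varphi_j$ are determined (up to additive constants) by $\sigma=\sum dH_j\wedge d\varphi_j$, so one is \emph{not} free to rescale a single $\varphi_j$ without simultaneously rescaling $H_j$. The honest statement is that the missing $2$ is a slip in the paper's formula rather than something absorbed by a normalization; your computation is the correct one.
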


Author thanks O.K. Sheinman for the formulation of the problem and discussing, A. I. Esterov for explanations on solvability of systems of quadratic equations and I.M. Krichever for remarks about action-angle variables.

\end{document}